\let\color@begingroup\relax
   \let\color@endgroup\relax}{}%
\def\fix@ieeecolor@hbox#1{%
  \hbox{\color@begingroup#1\color@endgroup}}
\patchcmd\@makecaption{\hbox}{\fix@ieeecolor@hbox}{}{\FAILED}
\patchcmd\@makecaption{\hbox}{\fix@ieeecolor@hbox}{}{\FAILED}
\def\BibTeX{{\rm B\kern-.05em{\sc i\kern-.025em b}\kern-.08em
    T\kern-.1667em\lower.7ex\hbox{E}\kern-.125emX}}
\def \bF {\pmb{F}}
\def \bw {\pmb{w}}
\def \bX {\pmb{X}}
\def \bv {\pmb{v}}
\def \bp {\pmb{p}}
\def \by {\pmb{y}}
\def \bu {\pmb{u}}
\def \bq {\pmb{q}}
\def \bb {\pmb{b}}
\def \bk {\pmb{k}}
\def \bh {\pmb{h}}
\def \bl {\pmb{l}}
\def \bzero {\pmb{0}}
\def \bone {\pmb{1}}
\def \cE {\mathcal{E}}
\def \cV {\mathcal{V}}
\def \vec {\textnormal{vec}}
\newtheorem{theorem}{Theorem}
\newtheorem{lemma}{Lemma}
\newtheorem{definition}{Definition}
\newtheorem{remark}{Remark}
\newtheorem{property}{Property}
\newtheorem{assumption}{Assumption}
\newtheorem{proposition}{Proposition}
\theoremstyle{definition}
\newcolumntype{C}[1]{>{\centering\arraybackslash$}m{#1}<{$}}
\newlength{\mycolwd}                                         
\begin{document}

\title{Single-Integrator Consensus Dynamics over Minimally Reactive Networks}

\author{Amirhossein Nazerian,~\IEEEmembership{Graduate Student Member,~IEEE,} David Phillips, Hernán A. Makse, and Francesco Sorrentino,~\IEEEmembership{Senior Member,~IEEE}
\thanks{A. Nazerian, D. Phillips, and F. Sorrentino are with the Mechanical Engineering Department, University of New Mexico, Albuquerque, NM, 87131 USA. (email: \href{mailto:anazerian@unm.edu}{anazerian@unm.edu}; \href{mailto:dphillips1@unm.edu}{dphillips1@unm.edu}; \href{mailto:fsorrent@unm.edu}{fsorrent@unm.edu})}%
\thanks{H. A. Makse is with Levich Institute and Physics Department, City College of New York, New York, New York 10031, USA. (email: \href{mailto:hmakse@ccny.cuny.edu}{hmakse@ccny.cuny.edu})}
}

\maketitle
\thispagestyle{empty}

\begin{abstract}
    The problem of achieving consensus in a network of connected systems arises in many science and engineering applications. 
    \textcolor{black}{In contrast to previous works, we focus on the system reactivity, i.e., the initial amplification of the norm of the system states.}
    We identify a class of networks that we call minimally reactive, which are such that the indegree and the outdegree of each node of the network are the same.
    We propose several optimization procedures in which minimum perturbations (links or link weights) are imposed on a given network topology to make it minimally reactive.
    {
    A new concept of structural reactivity is introduced which measures how much
    a given network is far from becoming minimally reactive by
    link perturbations. 
    The structural reactivity of directed random graphs is studied.
    }
\end{abstract}

\begin{IEEEkeywords}
Reactivity, consensus problem, network optimization
\end{IEEEkeywords}

\section{Introduction}

\IEEEPARstart{T}{he consensus} problem is relevant to all those applications for which it is desired that the states of several dynamical systems or agents reach an agreement, such as the distributed control of multi-agent systems \cite{amirkhani2022consensus}, formation control \cite{Nuño2022Leaderless}, flocking \cite{Qin2022Multiagent}, distributed sensor networks \cite{Shen2022Consensus}, and cyber security \cite{YUAN2021Secure}.


In a general consensus problem, \textcolor{black}{the goal is for agents initialized from different initial conditions to converge to the same state, also called the consensus state.}
Previous works have established globally stable protocols to guarantee convergence, see, e.g.,  \cite{Consensus2004Olfati,olfati2007consensus,MIAO2016Collision,LIU2009Consensus}.
These papers focus on the steady-state behavior of consensus dynamics.
\textcolor{black}{A large body of work studied the finite-time stabilizing consensus problems with a prescribed performance, see, e.g., \cite{Wang2019Prescribed,NING2019Practical,Ning2023Fixed}.}
In engineering applications, it is often required that the state trajectories do not deviate from the consensus state in the transient dynamics.
\textcolor{black}{Sometimes a monotonic decrease in the norm of the state vector may be required.}

An important characterization of the transient behavior of a linear system (or linearized system about a fixed point) is provided by the concept of `reactivity' \cite{Neubert1997ALTERNATIVES}.
The reactivity measures the rate of the change of the norm of the state vector as time $t \rightarrow 0$. 
Reference \cite{Neubert1997ALTERNATIVES} showed that for an LTI system in the form $\dot{x}(t) = A x(t)$, the reactivity $R(A)$ of the matrix $A$ is equal to the largest eigenvalue of $(A + A^\top)/2$, also simply called the reactivity of the matrix $A$.
Even if the matrix $A$ is Hurwitz, its reactivity may be positive, zero, or negative.
A positive reactivity results in an initial growth of the norm of the state vector.
\textcolor{black}{
The relationship between reactivity, non-normality, and transient behavior has been the subject of study since the seminal work by Threfthen \cite{trefethen1991pseudospectra}, and more recent works such as \cite{Neubert1997ALTERNATIVES,Asllani2018Structure,Duan2022Network}.
}

In this letter, we study the reactivity of the consensus dynamics.
We prove that `minimally reactive networks' have zero reactivity.
We then propose optimizations in which the weights of a given directed and weighted network are adjusted so that the network becomes minimally reactive.
It is proven that only strongly connected networks can become minimally reactive under weight perturbation.
We also propose an optimization method to add/remove links from a given directed and unweighted network to make it minimally reactive. 

The rest of the paper is organized as follows. Notation and background information are introduced in Sec.\,\ref{sec:notation}.
The consensus problem is discussed in Sec.\,\ref{sec:consensus}.
\textcolor{black}{
The weight perturbation procedure is in Sec.\,\ref{sec:weight}.
The link perturbation procedures are presented in Sec.\,\ref{sec:link}.
}
The conclusions are provided in Sec.\,\ref{sec:conclusion}.

\section{Notation and background} \label{sec:notation}

Given a real square matrix $B$, we order its eigenvalues  such that $Re[\lambda_n(B)] \leq \hdots \leq Re[\lambda_2 (B)] \leq Re[\lambda_1(B)]$, where he notation $Re[\cdot]$ indicates the real part of its argument.
If $B$ is symmetric, $\lambda_i(B)\in \mathbb{R}, \ \forall i$.
Given two matrices $A \in \mathbb{R}^{n \times p}$ and $B \in \mathbb{R}^{m \times q}$, the expression $A \otimes B  \in \mathbb{R}^{nm \times pq}$ denotes the Kronecker product of $A$ and $B$. 
The vectorization function $\vec(\cdot): \mathbb{R}^{m \times n} \rightarrow \mathbb{R}^{mn}$ takes a matrix and returns a vector by stacking all the columns of the matrix on top of each other.

Let $G = (\mathcal{V}, \mathcal{E}, A)$ be a directed graph with set of nodes $\mathcal{V} = \{v_1, v_2, \hdots, v_n \}$, set of edges $\mathcal{E} \subseteq \mathcal{V} \times \mathcal{V}$, and a adjacency matrix $A = [A_{ij}]$ with non-negative entries. 
If there is a link going from the node $v_j$ to node $v_i$, i.e., $(v_j, v_i) \in \mathcal{E}$, then $A_{ij} > 0$, otherwise $A_{ij} = 0$.
\textcolor{black}{If $A_{ij} \in \{ 0,1 \}, \forall i,j$, then the graph $G$ is called `unweighted', otherwise, `weighted'.}
The indegree and outdegree of node $v_i$ are defined as $\deg^{in}_i = \sum_j A_{ij}$ and $\deg_i^{out} = \sum_j A_{ji}$, respectively.
A directed spanning tree of the directed graph $G$ rooted at $r$ is a subgraph $T$ of $G$ such that $T$ is a directed tree and contains a directed path from $r$ to any other vertex in $V$. 
\begin{assumption} \label{assump:G}
    We assume the directed graph $G$ has at least one directed spanning tree. 
\end{assumption}
\color{black}

The Laplacian matrix corresponding to graph $G$ is defined as $L = A - D$ where $D$ is a diagonal matrix such that $D_{ii} = \sum_j A_{ij}$.
Since the sums of the entries in the rows of the matrix $L$ are equal zero, $\lambda_1(L) = 0$ and its corresponding normalized right eigenvector $\bv_1 = 1/\sqrt{n} \bone_n$, where $\bone_n$ is a vector of ones with length $n$.
Thus, $Re[\lambda_N (L)] \leq \cdots \leq Re[\lambda_2 (L)] < \lambda_1 (L) = 0$.
\textcolor{black}{
The left eigenvector of the matrix $L$ corresponding to $\lambda_1(L) = 0$ is denoted by $\bw_1 = [w_1 \ \, w_2 \ \cdots \ w_n]^\top \in \mathbb{R}^n$.
}

The complement graph of the unweighted and directed graph $G = (\mathcal{V}, \mathcal{E})$ is defined as $H = (\mathcal{V}, \mathcal{U} - \mathcal{E})$, where $\mathcal{U} = \mathcal{V} \times \mathcal{V}$.

\section{Consensus Problem} \label{sec:consensus}
The linear single-integrator consensus problem is defined by the following set of equations that describe the time evolution of the states of the nodes of the dynamical network \cite{Consensus2004Olfati},
\color{black}
\begin{equation} \label{eq:adjecency}
    \dot{x}_i(t) = \sum_{j =1}^n A_{ij}  \left( x_j(t)-x_i(t) \right)  = \sum_j L_{ij} x_j
\end{equation}
where $x_i(t) \in \mathbb{R}$ is the state vector of the dynamical network, $A = [A_{ij}]$ and $L=[L_{ij}]$ are the adjacency matrix and the Laplacian matrix corresponding to the directed graph $G$, respectively.
The term $x_j(t)-x_i(t)$ represents linear diffusive coupling.

Based on Assumption \ref{assump:G}, system \eqref{eq:adjecency} globally converges to consensus $x_1 = x_2 =  \cdots = x_n = x_c$ for $t \rightarrow \infty$.
The consensus state $x_c = \sum_{j=1}^n w_j x_j(0)/\sum_{j=1}^n w_j$ if $\sum_{j=1}^n w_j > 0$ \cite[Corollary 2]{Consensus2004Olfati}. 
In this letter, we are mostly interested in the transient dynamics towards the consensus state, and how this is affected by
the reactivity  of the linear system \eqref{eq:adjecency} \cite{Neubert1997ALTERNATIVES}, i.e., the largest initial amplification of the norm of the vector $\bX=[x_1 \ \ x_2 \ \cdots \ x_n]^\top$,
which is equal to:
\begin{align} \label{eq:reactivity0}
\begin{split}
    R( L) & := \max_{\| \bX\| \neq 0} \left[ \left( \frac{1}{\|  \bX \|} \frac{d \|  \bX\|}{d t} \right)_{t = 0} \right] \\
    & = \max_{  \bX \neq 0} \frac{{ \bX}^\top S   \bX }{{  \bX}^\top  \bX}
     =
     \lambda_{1}\left(S \right),
\end{split}
\end{align}
where $S = ({L + L^\top})/{2}$. A positive (negative) reactivity $R( L)$ indicates that the norm of $ \bX$ tends to grow (shrink) in the limit of $t \rightarrow 0$. 
As already mentioned, the reactivity of a matrix coincides with the largest eigenvalue of the symmetric part of that matrix \cite{Neubert1997ALTERNATIVES}.
Here we are interested in minimizing the reactivity $R(L)$.
\begin{theorem}
    The reactivity of the Laplacian $L$, $R(L)$, is non-negative.
\end{theorem}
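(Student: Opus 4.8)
The plan is to use the variational characterization of $R(L)$ already recorded in \eqref{eq:reactivity0}, namely
$R(L) = \lambda_1(S) = \max_{\bX \neq \bzero}\, (\bX^\top S \bX)/(\bX^\top \bX)$ with $S = (L + L^\top)/2$, and then simply to exhibit one test vector on which this Rayleigh quotient is non-negative. The natural candidate is $\bX = \bone_n$, the all-ones vector, precisely because it is the right eigenvector of $L$ associated with $\lambda_1(L) = 0$ --- equivalently, the statement (recalled in Sec.\,\ref{sec:notation}) that the row sums of $L$ vanish.

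The steps, in order, are: first, note that for any real vector $\bX$ the quadratic form only sees the symmetric part, $\bX^\top S \bX = \tfrac12(\bX^\top L \bX + \bX^\top L^\top \bX) = \bX^\top L \bX$, since $\bX^\top L \bX$ is a scalar and hence equals its own transpose. Second, evaluate at $\bX = \bone_n$: $\bone_n^\top S \bone_n = \bone_n^\top L \bone_n = \bone_n^\top(L\bone_n) = \bone_n^\top \bzero_n = 0$. Third, since the maximum defining $\lambda_1(S)$ dominates the value of the Rayleigh quotient at the particular vector $\bone_n$, we conclude $R(L) = \lambda_1(S) \geq (\bone_n^\top S \bone_n)/(\bone_n^\top \bone_n) = 0/n = 0$, which is the claim.

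There is essentially no technical obstacle here; the only point that requires care --- and the one I would flag as the ``main'' subtlety --- is that one cannot shortcut the argument by invoking $\lambda_1(L) = 0$ directly, because $L$ is in general non-normal and the eigenvalues of $L$ need not be eigenvalues of its symmetric part $S$. The argument genuinely relies on the Rayleigh-quotient characterization of the top eigenvalue of the symmetric matrix $S$. A second bookkeeping point worth stating is that what is used is that $\bone_n$ is a right null vector of $L$ (row sums zero); it is \emph{not} in general a left null vector, since the column sums of $L$ equal $\deg^{out}_j - \deg^{in}_j$, which vanish only when the network is minimally reactive. Finally, one may observe that equality $R(L) = 0$ forces $\bone_n$ to be a maximizer of the Rayleigh quotient of $S$, which foreshadows the later characterization of minimally reactive networks.
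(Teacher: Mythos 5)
Your proof is correct and follows essentially the same route as the paper: both bound $\lambda_1(S)$ from below by the Rayleigh quotient of $S$ evaluated at the all-ones direction (the paper uses the normalized $\bv_1 = \tfrac{1}{\sqrt{n}}\bone_n$, you use $\bone_n$ directly), exploiting $L\bone_n = \bzero_n$. The only cosmetic difference is that you simplify via the scalar-transpose identity $\bX^\top S\bX = \bX^\top L\bX$ while the paper writes out both terms using the eigenvalue relation; your cautionary remarks about non-normality are sound but do not change the argument.
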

\begin{proof}
The reactivity in \eqref{eq:reactivity0} is,
\begin{equation} \label{eq:lambda}
\begin{alignedat}{3} 
    & R(L) && = \lambda_{1} (S)  \\
    &  && = \max_{ \bX \neq 0} \frac{{\bX}^\top S \bX }{{ \bX}^\top \bX} && \geq { \bv_1}^\top S \bv_1  \\
    & && && =   { \bv_1}^\top \left(\frac{L + L^\top}{2} \right) \bv_1 \\
    & && && = \frac{1}{2}(\bv_1^\top \lambda_1(L) \bv_1  + \lambda_1(L) \bv_1^\top \bv_1 ) \\
    & && && = \lambda_1(L) = 0.
\end{alignedat}
\end{equation}
Hence, $R(L) \geq 0$.  That concludes the proof.
\end{proof}
\begin{proposition}
    A network with Laplacian $L^*$ is minimally reactive, $R(L^*) = 0$, if and only if the sums over the columns of the matrix $L^*$ are zero.
\end{proposition}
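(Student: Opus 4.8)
The plan is to reduce everything to the single identity
$S\bone_n = \tfrac12 L^{*\top}\bone_n$, where $S=(L^*+L^{*\top})/2$. This holds because $L^*$, being a Laplacian, always has zero row sums, i.e.\ $L^*\bone_n=\bzero$. Consequently ``the columns of $L^*$ sum to zero'' is equivalent to $L^{*\top}\bone_n=\bzero$, which in turn is equivalent to $S\bone_n=\bzero$, i.e.\ to $\bone_n$ (equivalently $\bv_1 = \bone_n/\sqrt n$) being an eigenvector of $S$ for eigenvalue $0$. So the proposition is really the statement ``$\lambda_1(S)=0$ iff $\bone_n\in\ker S$''. Throughout I would use Theorem~1, which already gives $R(L^*)=\lambda_1(S)\ge 0$, together with the fact extracted from its proof that $\bv_1^\top S\bv_1 = 0$.

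For the ``only if'' direction I would assume $R(L^*)=\lambda_1(S)=0$. Then the maximum of the Rayleigh quotient $\bX^\top S\bX/\bX^\top\bX$ equals $0$, and it is attained at $\bX=\bv_1$ since $\bv_1^\top S\bv_1=0$. I would then invoke the standard fact that, for a symmetric matrix, any vector attaining the maximal Rayleigh quotient lies in the eigenspace of the largest eigenvalue: expanding $\bv_1$ in an orthonormal eigenbasis of $S$ and comparing the resulting sum with $\lambda_1(S)$ shows the components along eigenvectors with eigenvalue strictly below $\lambda_1(S)$ must vanish. Hence $S\bv_1=\lambda_1(S)\bv_1=\bzero$, so $S\bone_n=\bzero$, and by the identity above the column sums of $L^*$ vanish.

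For the ``if'' direction I would assume the column sums of $L^*$ are zero and show $\lambda_1(S)\le 0$, which with Theorem~1 forces $R(L^*)=0$. This is exactly where the balanced structure (indegree $=$ outdegree at every node) is used: under this hypothesis $S$ is symmetric, its off-diagonal entries $S_{ij}=(L^*_{ij}+L^*_{ji})/2\ge 0$ (they are averages of adjacency entries), its diagonal entries $S_{ii}=L^*_{ii}\le 0$, and it has zero row sums because $L^*\bone_n=\bzero$ and $L^{*\top}\bone_n=\bzero$. Thus $-S$ is a symmetric diagonally dominant matrix with non-negative diagonal, hence positive semidefinite; equivalently, each Gershgorin disc of $S$ is centred at $S_{ii}\le 0$ with radius $\sum_{j\ne i}S_{ij}=-S_{ii}$, so its rightmost point is $0$ and $\lambda_1(S)\le 0$. (One can equivalently note that $S$ is precisely the Laplacian of the undirected weighted graph whose adjacency matrix is the symmetrized one, whose degree matrix coincides with $D^*$ by balancedness.)

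I expect the only delicate point to be the ``only if'' step: attaining the maximal Rayleigh quotient must be shown to force membership in the top eigenspace, which requires the eigendecomposition argument rather than just the variational inequality used in Theorem~1. The ``if'' direction is essentially bookkeeping, the one conceptual ingredient being that balancedness turns the symmetric part of $L^*$ into a genuine (negative) graph Laplacian.
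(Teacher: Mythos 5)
Your proof is correct, and its necessity half follows essentially the same route as the paper's: both identify $\bv_1=\bone_n/\sqrt{n}$ as a maximizer of the Rayleigh quotient of $S=(L^*+L^{*\top})/2$ when $\lambda_1(S)=0$, conclude $S\bv_1=\bzero_n$, and translate $S\bone_n=\bzero_n$ into zero column sums of $L^*$ via $L^*\bone_n=\bzero_n$. You are more careful on one point the paper glosses over, namely justifying (by expansion in an eigenbasis) that a vector attaining the maximal Rayleigh quotient must lie in the top eigenspace; the paper simply asserts that $\bv_1$ ``becomes the maximizer'' and is therefore the eigenvector for $\lambda_1(S^*)$. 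Where you genuinely diverge is the sufficiency direction. The paper declares the zero-row-sum condition on $S^*$ to be ``necessary and sufficient'' without showing that $S^*\bone_n=\bzero_n$ forces $\lambda_1(S^*)\le 0$; a priori, $0$ could be a non-extremal eigenvalue of $S^*$. Your Gershgorin/diagonal-dominance argument --- that under the zero-column-sum hypothesis $S$ has non-negative off-diagonal entries, non-positive diagonal, and zero row sums, hence is the negative of an undirected weighted graph Laplacian and therefore negative semidefinite --- supplies exactly the missing inequality $\lambda_1(S)\le 0$, which combined with Theorem 1 closes the equivalence. In this respect your argument is more complete than the one printed in the paper, and the observation that balancedness turns $S$ into a genuine (negated) Laplacian is the right conceptual way to see why the condition is sufficient and not merely necessary.
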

\begin{proof}
    Equation \eqref{eq:lambda} suggests that the minimum possible reactivity is achieved by a network with the Laplacian $L^*$, which is such that $R(L^*) = \lambda_1 (S^*)=0$ where $S^* = (L^* + {L^*}^\top)/2$. 
    It is inferred that the inequality in \eqref{eq:lambda} is satisfied with the equal sign when $S=S^*$. 
    This means that $\bX^* = \bv_1$ becomes the maximizer in \eqref{eq:lambda}. 
    Hence, the corresponding eigenvector to $\lambda_1(S^*)$ is $\bv_1 = 1/\sqrt{n} \bone_n$. 
    We thus conclude that a necessary and sufficient condition for minimal reactivity, $R(L^*) = 0$, is that the matrix $S^*$ has zero row-sum. This in turn implies that
    \begin{equation}
        0 = \sum_j S_{ij}^* = \frac{1}{2} \left( \sum_j L_{ij}^* +  \sum_i L_{ij}^* \right) = \frac{1}{2} \left( \sum_i L_{ij}^* \right),
    \end{equation}
    i.e., that the sums over the columns of the matrix $L^*$ are zero.
    That concludes the proof.
\end{proof}
\color{black}
It follows that each node of a minimally reactive network has equal indegree and outdegree, i.e., $\deg_i^{in} = \deg_i^{out}, \ \forall i$. Networks with nodes that all have equal indegrees and outdegrees are often called `balanced' networks.
As an example, the directed network with the Laplacian
\begin{equation}
    L^* = \begin{bmatrix}
-9 & 0 & 3 & 6 \\ 
5 & -5 & 0 & 0 \\ 
0 & 5 & -5 & 0 \\ 
4 & 0 & 2 & -6 
\end{bmatrix}
\end{equation}
has minimal reactivity.
It is easy to see that any undirected network is minimally reactive.

\subsection{Properties of minimally reactive networks}
Next, we describe some of the properties of a minimally reactive network.
\begin{property}
The consensus state $x_c$ in \eqref{eq:adjecency} is equal to the average over the initial conditions, i.e., $x_c = 1/n \sum_{j=1}^n x_j(0)$. 
\end{property}
This is due to the fact that the average over the states $1/n \sum_{j=1}^n x_j$ is a constant of motion \cite[Theorem 6]{Consensus2004Olfati}.
\textcolor{black}{This also follows from the fact that the left eigenvector $\bw_1 = 1/\sqrt{n} \bone_n$.}
\begin{property}
The norm of the time trajectory $\| \bX \|$ does not increase as time increases if the network is minimally reactive.
\end{property}
This follows from \eqref{eq:reactivity0}. 
For minimally reactive networks $R(L^*) = 0$, so  $d\| \bX \| /dt \leq 0, \forall t\geq 0$.

\section{Weight perturbation} \label{sec:weight}
\color{black}
In this section, we focus on the case of weighted graphs.
Given the Laplacian matrix $L$, we aim to find a perturbation matrix $P=[P_{ij}]$ with minimum $\sum_{i,j} P^2_{ij}$ such that $L^* = [L^*_{ij}] = P + L$ is minimally reactive.
We require $L^*$ and $L$ to have the same structure, i.e.,
\begin{equation*}
    \begin{cases}
        L^*_{ij} L_{ij} = (P_{ij} + L_{ij}) L_{ij} > 0 \qquad & \textnormal{if} \ L_{ij} \neq 0, \\
        L^*_{ij} = P_{ij} + L_{ij} = 0 \qquad & \textnormal{if} \ L_{ij} = 0, 
    \end{cases} \quad \forall i,j.
\end{equation*}

\color{black}
\textcolor{black}{In order to satisfy $R(L^*) = 0$,} $L^*$ should have sums over its rows and sums over its columns equal to zero. 
Also, by definition, a weight perturbation matrix $P$ does not add links. 
Hence, the following conditions must be satisfied:
\begin{subequations} \label{eq:proof}
 \begin{gather} 
    \sum_{j=1}^n P_{ij} = 0, \quad i = 1, \hdots, n,  \label{eq:rowsum} \\
    \sum_{i=1}^n P_{ij} = - \sum_{i=1}^n L_{ij}, \quad j = 1, \hdots, n,  \label{eq:columnsum} \\
    P_{ij} = 0 \quad \textnormal{if} \quad L_{ij} = 0.  \label{eq:pzero}
\end{gather}   
\end{subequations}
Then, we define $\bp := \vec(P)$ as the vectorized matrix $P$. 
Next, we evaluate the vectorized version of \eqref{eq:proof}. The constraint over the row sum, \eqref{eq:rowsum}, is rewritten as,
$K_1 \bp = \bzero_{n^2}$ where $K_1 = \bone_n^\top \otimes I_n$, and $\bzero_n$ is a column vector of zeros of length $n$.
The constraint over the column sum, \eqref{eq:columnsum} is rewritten as,
$K_2 \bp = \bl$ where $K_2 = I_n \otimes \bone_n^\top$ and $\bl$ is the right-hand side of the above equation. 
Finally, the constraint over zero entries in $P$, \eqref{eq:pzero}, can be written as $K_3 \bp = \bzero_m$, where $m$ is the number of zero entries in $L$.
$K_3 \in \mathbb{R}^{m \times n^2}$ is a matrix with zero rows except for an entry equal to one in each row that corresponds to each $P_{ij}$ that should be zero. 
Hence, \eqref{eq:proof} is rewritten as,
\begin{equation} \label{eq:linsyseq}
    \begin{bmatrix}
        K_1 \\ K_2 \\ K_3 
    \end{bmatrix} \bp =
    \begin{bmatrix}
        \bzero_{n^2} \\ \bl \\ \bzero_m
    \end{bmatrix}, \quad \textnormal{or equivalently} \quad K\bp = \bb.
\end{equation}
\textcolor{black}{Now we deal with the condition that the nonzero entries of the optimized Laplacian $L^*$ should have the same sign as their corresponding nonzero entries of the Laplacian $L$, i.e., $L_{ij} L^*_{ij} > 0$ if $L_{ij} \neq 0$, $\forall i,j.$}
This indicates that $L^*_{ij} = L_{ij} + P_{ij} > 0$, $\forall i\neq j$.
Therefore, $P_{ij} + L_{ij} \geq \epsilon$, $\forall i\neq j, L_{ij} \not= 0$, where $\epsilon > 0$ is a small tunable parameter. 
\textcolor{black}{For $i = j$ and $L_{ii} \neq 0$, $L^*_{ii} = L_{ii} + P_{ii} < 0$, and thus, $L_{ii} + P_{ii} \leq - \epsilon$.}
These conditions can be rewritten in the vectorized form as $H \bp \leq \bh$.
To find $P$ with this new constraint, we solve the quadratic optimization in $\bp$,
\begin{align} \label{eq:optp}
\begin{split}
    {\min_{\bp}} \quad & \bp^\top \bp \\
    \textnormal{subject to} \quad & K\bp = \bb \\
    & H \bp \leq \bh.
\end{split}
\end{align}
\begin{theorem} \label{theo2}
\textcolor{black}{
The optimization problem from \eqref{eq:optp} has a solution if and only if the initial graph corresponding to the Laplacian $L$ is strongly connected.
}
\end{theorem}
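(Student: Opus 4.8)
The plan is to prove both directions by relating the linear-algebraic solvability of \eqref{eq:optp} to a flow/conservation argument on the graph. For the ``only if'' direction, I would argue by contrapositive: suppose $G$ is not strongly connected. Then there is a partition of $\mathcal{V}$ into a nonempty proper ``sink-side'' set $\mathcal{S}$ and its complement such that no edge leaves $\mathcal{S}$ (i.e., there is a strongly connected component with no outgoing edges to the rest, which by Assumption~\ref{assump:G} cannot be the whole graph). Summing the column-sum constraint \eqref{eq:columnsum} over $j \in \mathcal{S}$ forces the total weight on edges entering $\mathcal{S}$ from outside to equal zero in $L^*$; but since $G$ has a spanning tree rooted somewhere, and $\mathcal{S}$ is not everything, there is at least one such incoming edge, whose weight in $L^*$ must stay strictly positive by the sign constraint $H\bp \le \bh$. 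This contradiction shows no feasible $\bp$ exists. I would set this up carefully using the block structure of $L^*$ after reordering nodes so that $\mathcal{S}$ comes last: the relevant off-diagonal block must vanish entrywise, which is incompatible with it having a strictly positive entry.

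For the ``if'' direction, assume $G$ is strongly connected. I want to exhibit a feasible $L^*$ with the same sign structure that is balanced (zero row and column sums). The natural candidate is to use the left Perron eigenvector: since $G$ is strongly connected, the Laplacian $L$ has a left null vector $\bw_1$ with all entries $w_i > 0$ (Perron–Frobenius applied to the irreducible nonnegative matrix $A + \text{diag}$, or equivalently to the row-stochastic matrix $I + \gamma L$ for small $\gamma>0$). Then define $L^*_{ij} = w_i L_{ij}/$ (something) — more precisely, I would take the new adjacency weights $A^*_{ij} = w_i A_{ij}$ for $i \ne j$ and rebuild $L^*$ as a Laplacian from $A^*$. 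One checks that $\bone^\top L^* = \bone^\top \operatorname{diag}(\bw) L$, and since $\bw^\top L = \bzero$ this gives zero column sums; zero row sums hold because $L^*$ is by construction a Laplacian of a nonnegative weighted graph. The sign structure is preserved because $w_i > 0$, so $A^*_{ij} > 0$ exactly when $A_{ij} > 0$. This $L^*$ satisfies $K\bp = \bb$ with $\bp = \vec(L^* - L)$, and by choosing $\epsilon$ small enough (smaller than $\min_{i\ne j, L_{ij}\ne 0} w_i |L_{ij}|$ and $\min_i w_i|L_{ii}|$ appropriately), it also satisfies $H\bp \le \bh$; hence the feasible set of \eqref{eq:optp} is nonempty, and since it is a closed convex set (intersection of an affine subspace with a polyhedron) on which a continuous coercive quadratic is minimized, a minimizer exists.

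The main obstacle I anticipate is the ``only if'' direction: making the ``no edge leaves $\mathcal{S}$'' argument fully rigorous requires correctly identifying the sink-type component, which uses Assumption~\ref{assump:G} (the spanning-tree hypothesis) to guarantee that such an $\mathcal{S}$ is a proper subset and yet still has an incoming edge — otherwise it would be disconnected from the root. A secondary subtlety, on the ``if'' side, is confirming that the feasibility constraints $H\bp\le\bh$ can genuinely be met simultaneously by a single small $\epsilon$; this is really just a matter of choosing $\epsilon$ below a finite positive minimum, but it should be stated explicitly so the strict-sign requirement is not overlooked. Existence of the optimizer itself, given nonempty feasibility, is then immediate from coercivity of $\bp\mapsto\bp^\top\bp$ and closedness of the constraint set.
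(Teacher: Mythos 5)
Your proposal is correct, but it takes a genuinely different route from the paper in both directions. For necessity, the paper interprets a feasible $L^*=L+P$ as a strictly positive flow, invokes the Augmenting Cycle Theorem of Ahuja--Magnanti--Orlin to decompose it into directed cycles (so every arc lies on a cycle), and then inducts along the spanning tree from the root $r$ to build a return path from every node to $r$; you instead argue by contrapositive with a cut: a sink strongly connected component $\mathcal{S}\subsetneq\cV$ must have total incoming weight equal to total outgoing weight (hence zero), while Assumption~\ref{assump:G} forces at least one arc into $\mathcal{S}$ of weight at least $\epsilon$. This is more elementary, avoiding the network-flow machinery entirely; the one presentational point to tighten is that the identity ``weight into $\mathcal{S}$ equals weight out of $\mathcal{S}$'' comes from combining the row-sum constraints \eqref{eq:rowsum} with the column-sum constraints \eqref{eq:columnsum}, not from the column sums alone. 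For sufficiency, the paper routes $L_{ij}$ units of flow around a cycle for each edge $(i,j)$ using a return path guaranteed by strong connectivity; you instead take the positive left Perron vector $\bw$ of $L$ and set $L^*=\operatorname{diag}(\bw)L$, which has zero column sums because $\bw^\top L=\bzero^\top$, zero row sums automatically, and the same sign pattern because $w_i>0$, with the $\epsilon$-margins handled by rescaling $\bw$ (or shrinking the tunable $\epsilon$). Both constructions are valid; yours is closed-form and arguably cleaner, while the paper's stays within the single combinatorial flow framework it also uses for necessity. You additionally make explicit the final step --- nonempty closed convex feasible set plus coercivity of $\bp\mapsto\bp^\top\bp$ gives attainment of the minimum --- which the paper leaves implicit.
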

\begin{proof}
\textcolor{black}{See Appendix \ref{appendix}.}
\end{proof}

\color{black}
\subsection{Example}

Here we provide an example of the application of the weight perturbation procedure to a network with positive reactivity to make it minimally reactive. 
We start with a weighted network with $n = 5$ nodes, and the Laplacian matrix
\begin{equation} \label{eq:lap}
    L = \begin{bmatrix}
        -10 & 0 & 5 & 0 & 5 \\ 
        0 & -2 & 0 & 2 & 0 \\ 
        0 & 0 & -5 & 0 & 5 \\ 
        1 & 3 & 0 & -4 & 0 \\ 
        0 & 0 & 0 & 1 & -1 
    \end{bmatrix}.
\end{equation}
The optimized Laplacian corresponding to a minimally reactive network obtained by solving the optimization in \eqref{eq:optp} is equal to
\begin{equation}
    L^* = 
    \begin{bmatrix}
        -3.95 & 0 & 3.49 & 0 & 0.46 \\ 
        0 & -1.76 & 0 & 1.76 & 0 \\ 
        0 & 0 & -3.49 & 0 & 3.49 \\ 
        3.95 & 1.76 & 0 & -5.71 & 0 \\ 
        0 & 0 & 0 & 3.95 & -3.95 
    \end{bmatrix}.
\end{equation}
We compare the time evolutions of these two networks from a randomly chosen initial condition from a standard normal distribution,
$\bX (0) = [0.0505 \ \,  0.7641 \ \,  -0.7397  \ \,   0.4984  \ \,  -1.9546]^\top$.
When the matrix $L$ is used, the consensus state is $x_c = -0.1024$, and when the matrix $L^*$ is used, the consensus states is $x_c^* = -0.2763$. 
\begin{figure}
    \centering
    \includegraphics[width=0.7\linewidth]{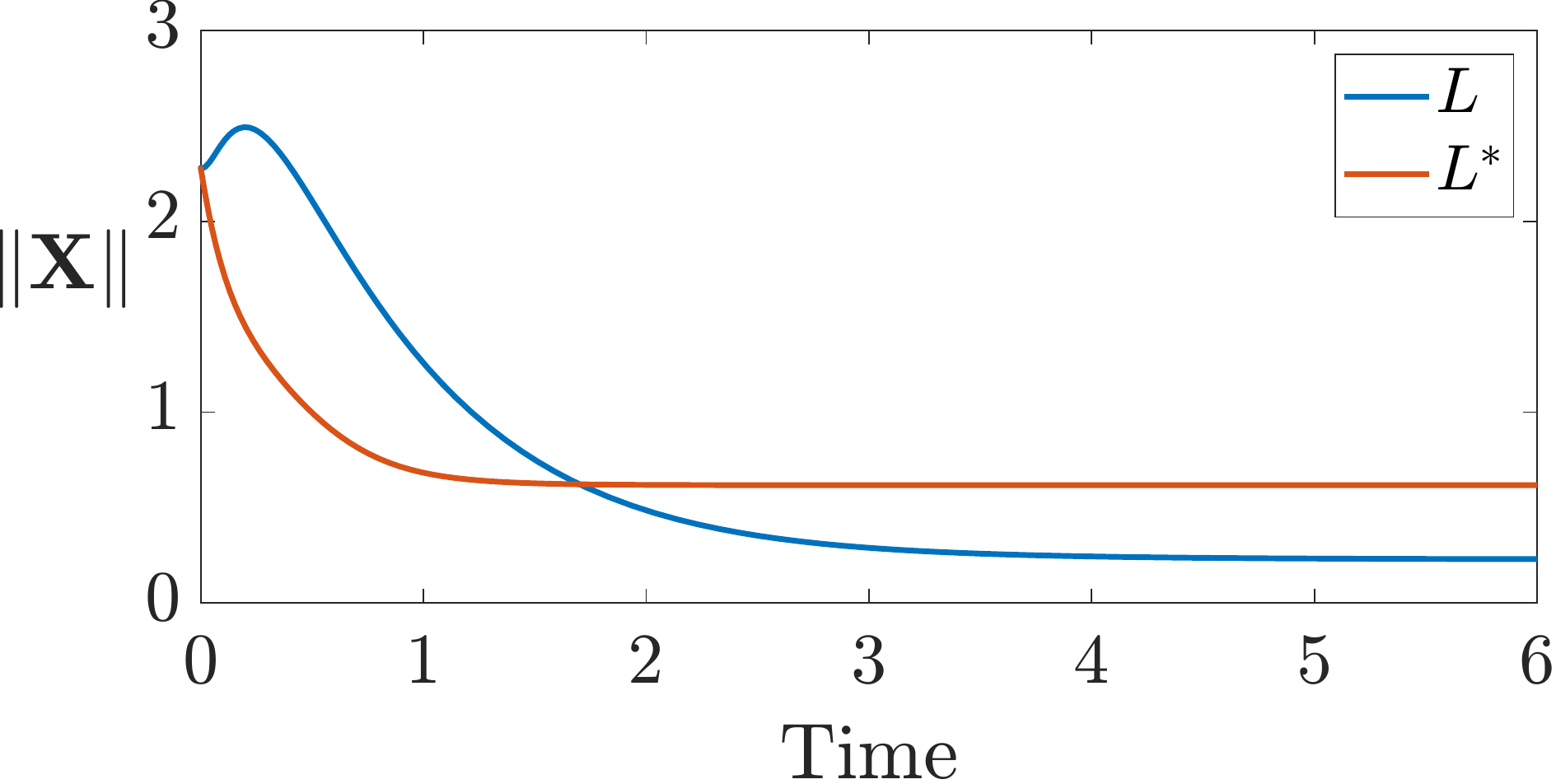}
    \caption{\textcolor{black}{
    Time evolution of the norm of the states $\bX(t)$ for two different topologies with Laplacians $L$ and $L^*$ in \eqref{eq:lap}.}
    }
    \label{fig:norm}
\end{figure}
Figure \ref{fig:norm} shows the time evolution of the norm of $\bX(t)$ for the cases of $L$ and $L^*$.
From Fig.\,\ref{fig:norm}, we see that $ \| \bX(t) \|$ increases at the initial times for the original network, while the norm monotonically approaches the norm of the consensus state for the minimally reactive network.

\color{black}


\section{Link perturbations} \label{sec:link}
In this section, we study the following problem: given a directed and unweighted graph $G$ with the Laplacian $L$, add/remove the minimum number of directed links such that the graph becomes minimally reactive, i.e., $R(L^*) = 0$, where $L^*$ is the modified Laplacian.

We discuss the solution to the above problem by only link addition, only link removal, and both link addition and removal in Sections \ref{sec:add}, \ref{sec:rem}, and \ref{sec:addrem}, respectively.
\textcolor{black}{Integer linear programs are introduced in their respective sections. 
For each case, we will show that these problems always allow for a `trivial' solution, hence they always have solutions.}

\subsection{Link addition} \label{sec:add}
Here we formulate an integer program to find the minimum number of links to be added to an unweighted and directed graph to make it minimally reactive. 
We consider an unweighted network with an adjacency matrix $A$ and introduce a minimal perturbation $P$ such that the perturbed network with the adjacency matrix $A^* = A + P$ is minimally reactive.
We assume the initial and the optimized graphs do not have self-loops.
Then, the optimal Laplacian is $L^* = A^* - D^*$, where $D^*$ is a diagonal matrix and entries on the main diagonal are the sums over the rows of the matrix $A^*$.

We aim to find the perturbation matrix $P = [P_{ij}], \ P_{ij} \in \{0,1 \}$ such that $A^* = A + P$ is a balanced network. 
This indicates that the sums over the columns and the sums over the rows of $A^*$ are equal, i.e., $\sum_j A_{ij} + P_{ij} = \sum_j A_{ji} + P_{ji}, \ \forall i$.
The entries of $P_{ij}$ are restricted such that if a link is present, $P$ cannot add it, i.e., we consider the simple graphs. 
This means $P_{ij} = 0$ if $A_{ij} \neq 0$.
The following linear integer optimization in $P$ finds the minimum number of links to be added,
\begin{align} 
\begin{split}
    {\min_{P_{ij}}} \quad & \sum_i \sum_j P_{ij} \\
    \textnormal{subject to} \quad & A^* = A + P \\
    & \sum_j A^*_{ij} = \sum_i A^*_{ij} \\
    & P_{ij} = 0 \ \, \textnormal{if} \ \, A_{ij} \neq 0, \quad P_{ij} \in \{0,1 \}.
\end{split}
\end{align}
By vectorizing $P$, the above optimization can be rewritten in the vectorized format.
As a result, a set of linear equality constraints $K_{eq} \bp = \bk_{eq}$ is obtained, and $\bp = [p_i] = \vec(P)$.
The following linear integer programming in $\bp$ is solved,
\begin{align} \label{eq:linkadd}
\begin{split}
    {\min_{\bp}} \quad & \bone_{n^2}^\top \bp \\
    \textnormal{subject to} \quad & K_{eq}\bp = \bk_{eq} \\
    & p_i \in \{0,1 \}.
\end{split}
\end{align}
\begin{remark}
We call a trivial solution for $P$ the one that makes the adjacency matrix $A$ symmetric by adding the directed link $(v_i, v_j)$, if $(v_j, v_i) \in \mathcal{E}$ and $(v_i, v_j) \notin \mathcal{E}$, $\forall i,j$.
\end{remark}

\subsection{Link removal} \label{sec:rem}
We can also consider an alternative procedure in which a minimum number of links is removed from a given network to make it minimally reactive, similar to the link addition procedure described in Sec.\,\ref{sec:add}.
We aim to find the matrix $Q = [Q_{ij}]$, $Q_{ij} \in \{0, 1 \}$, such that $A^* = A - Q$ is balanced.
The objective is to satisfy $\sum_j A_{ij} - Q_{ij} = \sum_j A_{ji} - Q_{ji}, \ \forall i$.
Also, when a link is absent in $A$, the corresponding entry in $Q$ should be zero, i.e., $Q_{ij} = 0$ if $A_{ij} = 0$.
The following linear integer optimization in $Q$ finds the minimum number of links to be removed,
\begin{align}  \label{eq:Q}
\begin{split}
    {\min_{Q_{ij}}} \quad & \sum_i \sum_j Q_{ij} \\
    \textnormal{subject to} \quad & A^* = A - Q \\
    & \sum_j A^*_{ij} = \sum_i A^*_{ij} \\
    & Q_{ij} = 0 \ \, \textnormal{if} \ \, A_{ij} = 0, \quad Q_{ij} \in \{0,1 \}.
\end{split}
\end{align}
By vectorizing $Q$, the above optimization can be rewritten in the vectorized format.
Similar to the previous section, the constraints in \eqref{eq:Q} can be rewritten in terms of a set of linear equality constraints in $\bq = \vec(Q)$, i.e., $K_{eq}\bq = \bk_{eq}$.
Hence, a linear integer program in $\bq$ is obtained,
\begin{align} \label{eq:linkrem}
\begin{split}
    {\min_{\bq}} \quad & \bone_{n^2}^\top \bq \\
    \textnormal{subject to} \quad & K_{eq}\bq = \bk_{eq} \\
    & q_i \in \{0, 1 \}.
\end{split}
\end{align}
\begin{remark}
We call a trivial solution for $Q$ the one that makes the adjacency matrix $A$ symmetric by removing all the directed links $(v_j, v_i) \in \mathcal{E}$ if $(v_i, v_j) \notin \mathcal{E}$, $\forall i,j$.
\end{remark}

\subsection{Link addition and removal} \label{sec:addrem}
Here, we aim to find $P$ and $Q$ such that $A^* = A + P - Q$ is balanced.
Hence, the condition $\sum_j A_{ij} + P_{ij} - Q_{ij} = \sum_j A_{ji} + P_{ji} - Q_{ji}, \ \forall i$ must be satisfied.
Similarly, $P_{ij} = 0$ if $A_{ij} \neq 0$ and $Q_{ij} = 0$ if $A_{ij} = 0$.
The following linear integer optimization in $P$ and $Q$ finds the minimum number of links to be either added or removed,
\begin{align} 
\begin{split}
    {\min_{P_{ij}, Q_{ij}}} \quad & \sum_i \sum_j P_{ij} + Q_{ij} \\
    \textnormal{subject to} \quad & A^* = A + P - Q \\
    & \sum_j A^*_{ij} = \sum_i A^*_{ij} \\
    & P_{ij} \, = 0 \quad  \textnormal{if} \quad A_{ij} \neq 0, \quad P_{ij} \in \{0,1 \}, \\
    & Q_{ij} = 0 \quad \textnormal{if} \quad A_{ij} = 0, \quad Q_{ij} \in \{0,1 \}. \\
\end{split}
\end{align}
We form the variable vector $[\bp^\top \ \bq^\top]^\top$, where $\bp = \vec(P)$ and $\bq = \vec(Q)$.
The above constraints are written in the vectorized form as $K_{eq} [\bp^\top \ \bq^\top]^\top = \bk_{eq}$.
To find the optimal $P$ and $Q$, we solve
\begin{align} \label{eq:linkaddrem}
\begin{split}
    {\min_{\bp, \bq}} \quad & J = \left[ \bone_{n^2}^\top \ \bone_{n^2}^\top \right] \begin{bmatrix}
        \bp \\ 
        \bq
    \end{bmatrix} \\
    \textnormal{subject to} \quad & K_{eq} \begin{bmatrix}
        \bp \\ 
        \bq
    \end{bmatrix} = \bk_{eq} \\
    & p_i, q_i \in \{0, 1 \}.
\end{split}
\end{align}
 In general, the solution returned by \eqref{eq:linkaddrem} yields an optimal value of the objective function that is lower or at worst equal to those of the solutions returned by either \eqref{eq:linkadd} and \eqref{eq:linkrem}.

 \begin{remark}
      \textcolor{black}{The optimization in \eqref{eq:linkaddrem} may have more than one optimal solution, all with the same objective value.}
     It is then possible to slightly modify the optimization problem to have it \textcolor{black}{select} a solution \textcolor{black}{among all the optimal solutions} for which adding links is more favorable than removing links. That is,
    \begin{align}
    \begin{split}
        {\min_{\bu}} \quad & [(\bone_{n^2} - \epsilon)^\top \ (\bone_{n^2} + \epsilon)^\top ]  \begin{bmatrix}
            \bp \\ \bq
        \end{bmatrix} \\
        \textnormal{subject to} \quad & K_{eq}\begin{bmatrix}
            \bp \\ \bq
        \end{bmatrix} = \bk_{eq} \\
        & p_i, q_i \in \{0, 1 \},
    \end{split}
    \end{align}
    where $0 < \epsilon < \frac{1}{n^2}$ is a tunable parameter..
 \end{remark}
 \color{black}
\begin{remark}
    Integer linear programs  are known to be NP-Hard, which means that an algorithm that runs in polynomial time in the worst case is unlikely to exist. 
    This does not imply that integer linear programs are unsolvable but that solution times are unlikely to scale well with even the most sophisticated of general solvers, see \cite{gurobi,miplib}. 
    We note this does not necessarily imply that problems \eqref{eq:linkadd}, \eqref{eq:linkrem}, and \eqref{eq:linkaddrem} are NP-Hard as they are specific versions of integer linear programs and polynomial-time algorithms may exist that can leverage their specific structure.
    
    Convex quadratic programming is polynomial-time solvable which is the form of the problem \eqref{eq:optp}. 
    Determining whether the problems have unique optimal solutions is difficult for problems \eqref{eq:linkadd}, \eqref{eq:linkrem}, and \eqref{eq:linkaddrem} and is only guaranteed for convex quadratic programs when the problem is strictly convex and possibly for problems when they have some structure.
\end{remark}

\begin{remark}
    If the network topology and the weights associated with the network connections cannot be changed, the reactivity still can be modified through the coupling strength $\sigma > 0$:
    \begin{equation}
        \dot{x}_i (t) = \sigma \sum_j L_{ij} x_j.
    \end{equation}
    The effective reactivity and the entire spectrum of the Laplacian matrix will then be rescaled by $\sigma$. 
    This may lead to a longer convergence time for reaching the consensus state.
\end{remark}

\color{black}
\subsection{Example}
In this section, we provide an example of graph optimization by link addition or removal, or both, in order to make such graphs minimally reactive.
Figure\,\ref{fig:link} (a) shows a randomly generated unweighted directed graph. 
We use the proposed link perturbation procedures to make the network minimally reactive.
Panels (b), (c), and (d) in Fig.\,\ref{fig:link} show the resulting optimized topologies by link addition \eqref{eq:linkadd}, link removal \eqref{eq:linkrem}, and link addition and removal \eqref{eq:linkaddrem}, respectively.
\begin{figure}
    \centering
    \includegraphics[width=0.55\linewidth]{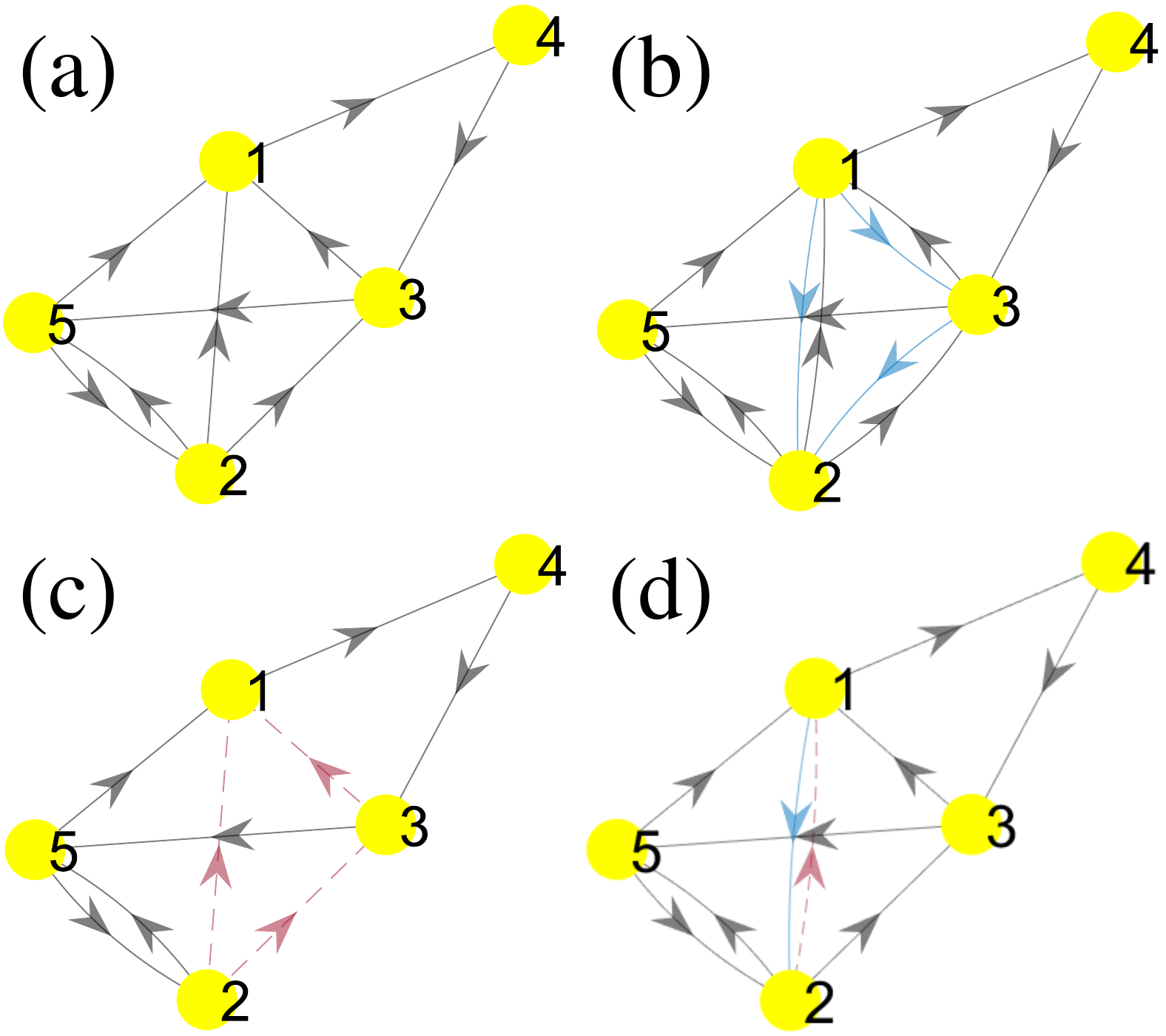}
    \caption{Optimizing (a) the unweighted and directed graph with (b) link addition, (c) link removal, and (d) link addition and removal. 
    The added links are shown in black color and the removed ones are shown in red (dashed lines.)}
    \label{fig:link}
\end{figure}
We see in Fig.\,\ref{fig:link} (b) and (c) that optimal topologies are generated by adding a minimum of three links and removing a minimum of three links, respectively.
The trivial solutions for link addition and link removal require seven link additions (by making all links bi-directional) and seven link removals (by removing all uni-directional links, respectively.)
On the hand, the link addition and removal optimization in panel (d) provides a solution in which a minimally reactive network is generated by performing only two link perturbations (i.e., one link addition and one link removal.)

\subsection{Structural reactivity}
The optimization procedures proposed in Sec.\,\ref{sec:addrem} find the minimum number of link perturbations (additions and/or removals) that make a network minimally reactive.
Here, we \textcolor{black}{introduce a practical measure `structural reactivity'} which assesses how much a given network is far from becoming minimally reactive by link perturbations.
\begin{definition}
    The structural reactivity of a given graph $G$ is the ratio between the minimum number of link perturbations and the number of nodes of the graph.
    Denoting the optimal objective function value from \eqref{eq:linkaddrem} by $J^*$, then the structural reactivity is
    \begin{equation} \label{eq:psi}
        \psi = \frac{J^*}{n}.
    \end{equation}
\end{definition}
We study $\psi$ for 
\textcolor{black}{directed} Erdős–Rényi  graphs (random graphs).
Figure\,\ref{fig:psimean} shows $\psi$ versus the connection probability $p$ of the random graph. 
\begin{figure}
    \centering
    \includegraphics[width=0.6\linewidth]{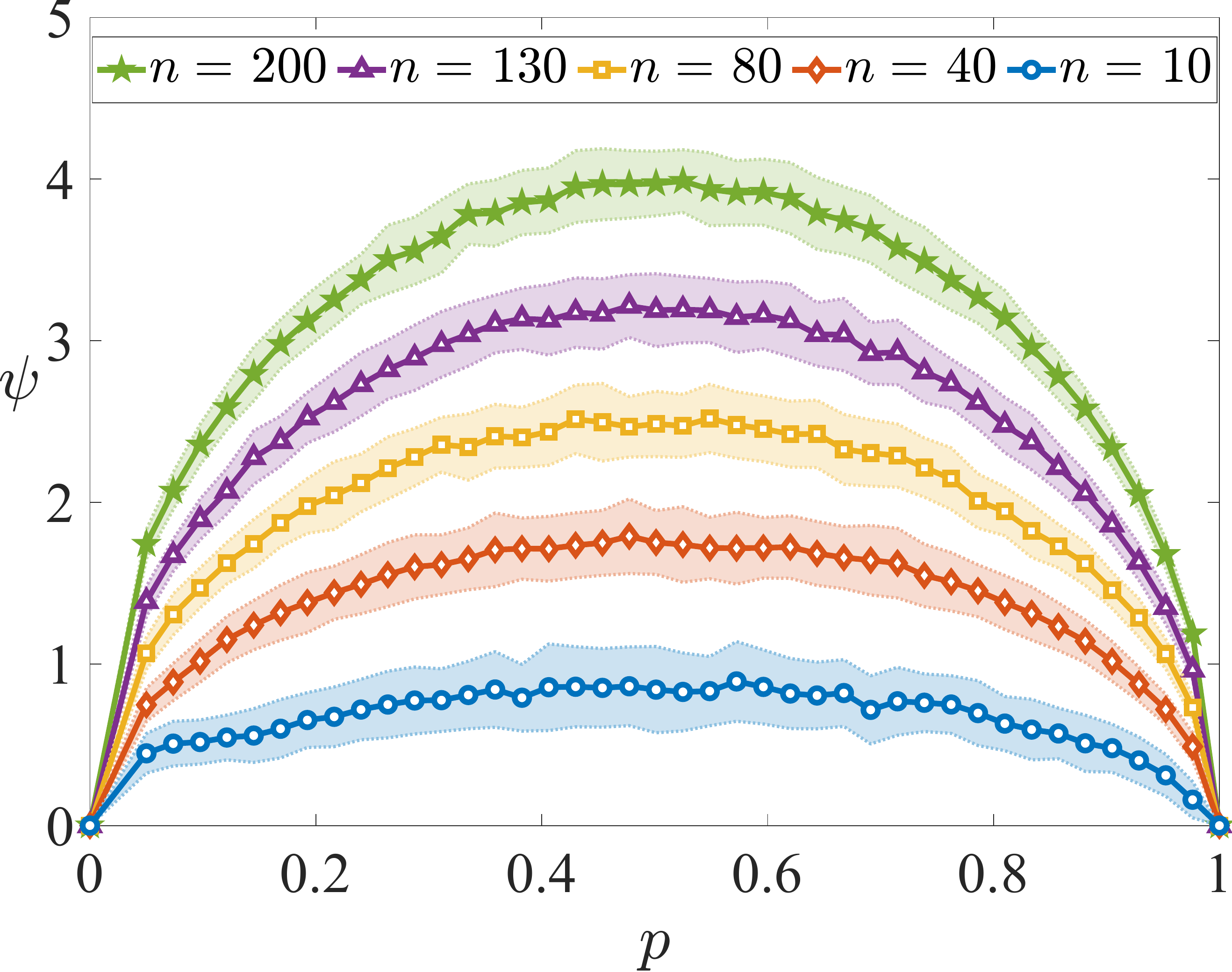}
    \caption{The mean of the structural reactivity $\psi$ is plotted against the connection probability $p$ of Erdős–Rényi graph (random graph) with $n$ nodes.
    For each pair of $p$ and $n$, 100 random graphs are generated and for each graph, $\psi$ is calculated using \eqref{eq:psi}, and here the mean value of $\psi$ over the 100 graphs is plotted. The shaded colored area corresponds to the standard deviation of the data points.}
    \label{fig:psimean}
\end{figure}
When $p = 0$, the corresponding graph is a null graph, which is minimally reactive since $\deg_i^{in} = \deg_i^{out} = 0, \ \forall v_i$. 
Therefore, no links are added or removed.
We see that as $p$ increases, the structural reactivity $\psi$ increases until around $p \approx 0.5$ when $\psi$ reaches its highest value.
For larger values of $p$, $\psi$ decreases until eventually it reaches 0 again when $p=1$ (i.e., fully connected graph.)

We call $G$ a graph and $H$ its complement. We note that $H$ is balanced if and only if $G$ is balanced.
It requires $J^* = a + r$ link perturbations to make $G$ balanced, where $a$ ($r$) is the number of optimally added (removed) links. 
While the graph $H$ requires $r$ ($a$) additional (removal) number of links to become balanced (i.e., for each added  (removed) link in $G$, the corresponding link in $H$ should be removed (added).)
Since the value of $\psi$ for a graph and its complement is the same, the curve of $\psi$ vs. $p$ should be symmetric about $p=0.5$ and, we expect the maximum to be achieved at $p=0.5$.
This is consistent with what is seen in Fig.\,\ref{fig:psimean}, i.e., the shape of the curve $\psi$ vs. $p$ for directed unweighted random graphs is concave, reaches a maximum around $p=0.5$, and is symmetric about $p=0.5$.

\section{Conclusions} \label{sec:conclusion}
In this letter, we have studied the consensus dynamics of directed networks and shown that a class of networks for which the indegrees and outdegrees of all the nodes are the same is minimally reactive. 
Several optimization procedures have been proposed to perturb a given graph by either adjusting the weights of the links or by adding/removing links. 

A new measure of structural reactivity has been introduced which determines how much a given graph is far from becoming minimally reactive by link perturbations. 
A study of random graphs has shown that their structural reactivity is largest when the connection probability is around 0.5.

\appendices
\section{Proof of Theorem \ref{theo2}} \label{appendix}
\textcolor{black}{\textit{(Necessity):}}
Let $L$ be given with $G=(\cV,\cE)$ as its associated graph. Recall from Assumption \ref{assump:G} that an out-directed spanning tree rooted at a node $r \in \cV$ exists. Assume there is a solution $P$ that satisfies  \eqref{eq:optp}. We first show that the solution can be interpreted as a set of $|\cE|$ directed cycles on $G$. For each node $i \in \cV$, the feasibility conditions \eqref{eq:rowsum} and \eqref{eq:columnsum} imply that
\begin{equation}
\label{eq:flowbalance}
\sum_{(i,j) \in \cE} (L_{ij}+P_{ij}) - 
\sum_{(j,i) \in \cE} (L_{ji} + P_{ji}) = 0.
\end{equation}
By interpreting the weight of each directed edge as a flow, we see these two conditions amount to ensuring that the flow into a node is equal to the flow out of a node. 
Equation \eqref{eq:flowbalance} are also called {\it flow-balance} constraints~\cite{amo}. 
Note that setting $P_{ij} = -L_{ij}$ for $(i,j) \in \cE$ results in a zero flow that satisfies \eqref{eq:flowbalance} (and also \eqref{eq:rowsum} and \eqref{eq:columnsum}). 
However, note that such a solution is not feasible to the condition that $L_{ij} + P_{ij} \geq \epsilon$. 
Then by the Augmenting Cycle Theorem \cite[Theorem 3.7]{amo}, the problem of finding a feasible solution to \eqref{eq:optp} is equivalent to finding a flow with at least $\epsilon$ flow along each arc that can be decomposed into at most $|\cE|$ directed cycles with positive flow. 
For the remainder of the proof, let $x_{ij} = L_{ij} + P_{ij}$ for $(i,j) \in \cE$ denote the flow associated with the feasible solution. 
Cycle decomposition means that {\it every arc is on a cycle}. 

Consider the root node $r$ and let any node $j \in \cV$ be given. We claim there is also a directed path from $j$ to $r$. Note that this will then show the theorem as then, for any nodes $i$ and $j$, there is a path from $i$ to $r$ and another path from $r$ to $j$, i.e., the graph is strongly connected.
Let $\{r,i_1,\ldots,i_k=j\}$ denote the directed path from $r$ to $j$. 
We prove there is a $j$-$r$ path by induction on $k$, the length of the path. As a base case, i.e., $i_1=i_k=j$ note that $(r,i_1=j)$ is on one of the cycles formed by $x$ which then forms the path from $i_1$ to $r$. 
For the inductive case, suppose that there is a directed path back from all nodes on paths of length $k \geq 1$ from $r$. 
If $j$ is on a path of length $k+1$, consider the edge $(i_k,i_{k+1}=j)$. This edge is also on a cycle so there is a path from $i_{k+1}$ to $i_k$. Then, by the inductive assumption, there is a path from $i_k$ to $r$ which results in a path from $j$ to $i_k$ to $r$. 
Thus, by induction, there is a path from $j$ to $r$ for every node $j$ in $G$.

\textcolor{black}{\textit{(Sufficiency):}
Let $L$ be given along with its associated graph $G = (\cV,\cE)$. We construct a solution $P$ that is feasible (but almost certainly not optimal) to  \eqref{eq:optp} using the following algorithm. Initialize all perturbations, $P_{ij}$ to zero. We iterate over each edge $(i,j) \in \cE$ as follows. As $G$ is strongly connected there is a directed path from $j$ to $i$ which we denote by $\{j=u_0,u_1,\ldots,u_k=i\}$. For $t=0$ to $k-1$, increment $P_{u_t,u_{t+1}}$ by $L_{ij}$. At the conclusion of the algorithm, we have a new feasible flow composed of $|\cE|$ directed cycles: one for each edge $(i,j) \in \cE$, each with $L_{ij}$ flow.}
 

\begin{thebibliography}{10}
\providecommand{\url}[1]{#1}
\csname url@samestyle\endcsname
\providecommand{\newblock}{\relax}
\providecommand{\bibinfo}[2]{#2}
\providecommand{\BIBentrySTDinterwordspacing}{\spaceskip=0pt\relax}
\providecommand{\BIBentryALTinterwordstretchfactor}{4}
\providecommand{\BIBentryALTinterwordspacing}{\spaceskip=\fontdimen2\font plus
\BIBentryALTinterwordstretchfactor\fontdimen3\font minus
  \fontdimen4\font\relax}
\providecommand{\BIBforeignlanguage}[2]{{%
\expandafter\ifx\csname l@#1\endcsname\relax
\typeout{** WARNING: IEEEtran.bst: No hyphenation pattern has been}%
\typeout{** loaded for the language `#1'. Using the pattern for}%
\typeout{** the default language instead.}%
\else
\language=\csname l@#1\endcsname
\fi
#2}}
\providecommand{\BIBdecl}{\relax}
\BIBdecl

\bibitem{amirkhani2022consensus}
A.~Amirkhani and A.~H. Barshooi, ``Consensus in multi-agent systems: a
  review,'' \emph{Artificial Intelligence Review}, vol.~55, no.~5, pp.
  3897--3935, 2022.

\bibitem{Nuño2022Leaderless}
E.~Nuño, A.~Loría, and E.~Panteley, ``Leaderless consensus formation control
  of cooperative multi-agent vehicles without velocity measurements,''
  \emph{IEEE Control Systems Letters}, vol.~6, pp. 902--907, 2022.

\bibitem{Qin2022Multiagent}
J.~Qin, Q.~Ma, P.~Yi, and L.~Wang, ``Multiagent interval consensus with
  flocking dynamics,'' \emph{IEEE Transactions on Automatic Control}, vol.~67,
  no.~8, pp. 3965--3980, 2022.

\bibitem{Shen2022Consensus}
K.~Shen, P.~Dong, Z.~Jing, and H.~Leung, ``Consensus-based labeled
  multi-bernoulli filter for multitarget tracking in distributed sensor
  network,'' \emph{IEEE Transactions on Cybernetics}, vol.~52, no.~12, pp.
  12\,722--12\,733, 2022.

\bibitem{YUAN2021Secure}
L.~Yuan and H.~Ishii, ``Secure consensus with distributed detection via two-hop
  communication,'' \emph{Automatica}, vol. 131, p. 109775, 2021.

\bibitem{Consensus2004Olfati}
R.~Olfati-Saber and R.~Murray, ``Consensus problems in networks of agents with
  switching topology and time-delays,'' \emph{IEEE Transactions on Automatic
  Control}, vol.~49, no.~9, pp. 1520--1533, 2004.

\bibitem{olfati2007consensus}
R.~Olfati-Saber, J.~A. Fax, and R.~M. Murray, ``{Consensus and cooperation in
  networked multi-agent systems},'' \emph{Proceedings of the IEEE}, vol.~95,
  no.~1, pp. 215--233, 2007.

\bibitem{MIAO2016Collision}
Z.~Miao, Y.~Wang, and R.~Fierro, ``Collision-free consensus in multi-agent
  networks: A monotone systems perspective,'' \emph{Automatica}, vol.~64, pp.
  217--225, 2016.

\bibitem{LIU2009Consensus}
X.~Liu, T.~Chen, and W.~Lu, ``Consensus problem in directed networks of
  multi-agents via nonlinear protocols,'' \emph{Physics Letters A}, vol. 373,
  no.~35, pp. 3122--3127, 2009.

\bibitem{Wang2019Prescribed}
Y.~Wang, Y.~Song, D.~J. Hill, and M.~Krstic, ``Prescribed-time consensus and
  containment control of networked multiagent systems,'' \emph{IEEE
  Transactions on Cybernetics}, vol.~49, no.~4, pp. 1138--1147, 2019.

\bibitem{NING2019Practical}
B.~Ning, Q.-L. Han, and Z.~Zuo, ``Practical fixed-time consensus for
  integrator-type multi-agent systems: A time base generator approach,''
  \emph{Automatica}, vol. 105, pp. 406--414, 2019.

\bibitem{Ning2023Fixed}
B.~Ning, Q.-L. Han, Z.~Zuo, L.~Ding, Q.~Lu, and X.~Ge, ``Fixed-time and
  prescribed-time consensus control of multiagent systems and its applications:
  A survey of recent trends and methodologies,'' \emph{IEEE Transactions on
  Industrial Informatics}, vol.~19, no.~2, pp. 1121--1135, 2023.

\bibitem{Neubert1997ALTERNATIVES}
M.~G. Neubert and H.~Caswell, ``Alternatives to resilience for measuring the
  responses of ecological systems to perturbations,'' \emph{Ecology}, vol.~78,
  no.~3, pp. 653--665, 1997.

\bibitem{trefethen1991pseudospectra}
L.~N. Trefethen, ``Pseudospectra of matrices,'' \emph{Numerical analysis},
  vol.~91, pp. 234--266, 1991.

\bibitem{Asllani2018Structure}
M.~Asllani, R.~Lambiotte, and T.~Carletti, ``Structure and dynamical behavior
  of non-normal networks,'' \emph{Science Advances}, vol.~4, no.~12, p.
  eaau9403, 2018.

\bibitem{Duan2022Network}
C.~Duan, T.~Nishikawa, D.~Eroglu, and A.~E. Motter, ``Network structural origin
  of instabilities in large complex systems,'' \emph{Science Advances}, vol.~8,
  no.~28, pp. 1--12, 2022.

\bibitem{gurobi}
\BIBentryALTinterwordspacing
``{Gurobi Optimizer Reference Manual},'' 2018. [Online]. Available:
  \url{http://www.gurobi.com}
\BIBentrySTDinterwordspacing

\bibitem{miplib}
A.~Gleixner, G.~Hendel, G.~Gamrath, T.~Achterberg, M.~Bastubbe, T.~Berthold,
  P.~Christophel, K.~Jarck, T.~Koch, J.~Linderoth \emph{et~al.}, ``Miplib 2017:
  data-driven compilation of the 6th mixed-integer programming library,''
  \emph{Mathematical Programming Computation}, vol.~13, no.~3, pp. 443--490,
  2021.

\bibitem{amo}
R.~K. Ahuja, T.~L. Magnanti, and J.~B. Orlin, \emph{Network Flows: Theory,
  Algorithms, and Applications}.\hskip 1em plus 0.5em minus 0.4em\relax
  Prentice Hall, Englewood Cliffs, USA, 1993.

\end{thebibliography}


\end{document}